\documentclass[10pt, conference, letterpaper]{IEEEtran}

\usepackage{amsmath,amssymb,amsfonts}
\usepackage{algorithmic}
\usepackage{graphicx}
\usepackage{textcomp}
\usepackage{booktabs}
\usepackage{multirow}
\usepackage{cite}
\usepackage{url}
\usepackage{xcolor}
\usepackage{enumitem}
\usepackage{adjustbox}
\usepackage{tikz}
\usetikzlibrary{positioning, arrows.meta, calc}

\newcommand{\decision}[1]{\textsf{\MakeUppercase{#1}}}

\usepackage{amsthm}
\newtheorem{definition}{Definition}
\newtheorem{theorem}{Theorem}

\begin{document}

\title{Validated Intent Compilation for Constrained Routing\\in LEO Mega-Constellations}

\author{\IEEEauthorblockN{Yuanhang Li}}

\maketitle

\begin{abstract}
Operating LEO mega-constellations requires translating high-level
operator intents (``reroute financial traffic away from polar links
under 80\,ms'') into low-level routing constraints---a task that
demands both natural language understanding and network-domain
expertise. We present an end-to-end system comprising three components:
(1)~a GNN cost-to-go router that distills Dijkstra-quality routing
into a 152K-parameter graph attention network achieving 99.8\% packet
delivery ratio with 17$\times$ inference speedup;
(2)~an LLM intent compiler that converts natural language to a typed
constraint intermediate representation using few-shot prompting with
a verifier-feedback repair loop, achieving 98.4\% compilation rate
and 87.6\% full semantic match on feasible intents in a 240-intent
benchmark (193 feasible, 47 infeasible); and
(3)~an 8-pass deterministic validator with constructive feasibility
certification that achieves 0\% unsafe acceptance on all 47 infeasible intents
(30 labeled + 17 discovered by Pass~8), with 100\% corruption
detection across 240 structural corruption tests and 100\% on 15
targeted adversarial attacks. End-to-end evaluation across four constrained
routing scenarios confirms zero constraint violations with both
routers. We further demonstrate that apparent performance gaps in
polar-avoidance scenarios are largely explained by topological
reachability ceilings rather than routing quality, and that the LLM
compiler outperforms a rule-based baseline by 46.2 percentage points
on compositional intents. Our system bridges the semantic gap between
operator intent and network configuration while maintaining the
safety guarantees required for operational deployment.
\end{abstract}

\begin{IEEEkeywords}
LEO satellite networks, intent-based networking, graph neural networks,
large language models, constrained routing, formal verification
\end{IEEEkeywords}

\section{Introduction}
\label{sec:intro}

Low Earth Orbit (LEO) mega-constellations such as Starlink, OneWeb,
and Kuiper are transforming global connectivity by deploying thousands
of satellites interconnected via inter-satellite links (ISLs). Operating
these networks presents unique challenges: the topology changes
continuously as satellites orbit, polar regions experience periodic
link dropout, and operators must enforce complex routing constraints
spanning latency guarantees, region avoidance, node maintenance, and
traffic prioritization.

Today, translating operator intent into network configuration requires
manual specification of routing policies---a process that is slow,
error-prone, and does not scale to the dynamic nature of LEO
constellations. Intent-based networking (IBN)~\cite{ibn_survey} promises
to bridge this gap by allowing operators to express high-level goals
that are automatically compiled into network configurations. However,
existing IBN approaches target terrestrial networks with relatively
stable topologies and do not address the unique constraints of satellite
mega-constellations.

We identify three key challenges in intent-driven LEO routing:

\begin{enumerate}[leftmargin=*]
\item \textbf{Compositional intent understanding.} Operator intents
combine multiple constraint types (``disable plane~7, avoid polar links
above 75\textdegree, and cap utilization at 80\%'') that must be
correctly decomposed and mapped to formal constraint representations.
Rule-based parsers handle simple intents but degrade sharply on
compositional ones (40\% vs.\ 86.2\% accuracy).

\item \textbf{Safety-critical verification.} In production networks,
a single undetected constraint violation can cascade into service
outages. The compiler's output must be formally verified before
reaching the routing layer, yet verification must be fast enough
for interactive use.

\item \textbf{Efficient constrained routing.} Applying constraints
modifies the network topology (disabling nodes, removing edges),
and the router must compute valid paths on the constrained graph
in real time. Traditional shortest-path algorithms are correct but
too slow for per-packet decisions at scale.
\end{enumerate}

We address these challenges with a three-component system:

\noindent\textbf{GNN Cost-to-Go Router} (Section~\ref{sec:gnn_router}).
A 3-layer graph attention network (152K parameters) trained via
supervised distillation from Dijkstra shortest paths. It achieves
99.8\% packet delivery ratio (PDR) while providing 17$\times$ inference
speedup, enabling real-time per-packet routing decisions.

\noindent\textbf{LLM Intent Compiler} (Section~\ref{sec:compiler}).
A Qwen3.5-9B language model with 6-shot prompting converts natural
language intents to a typed ConstraintProgram intermediate
representation. A verifier-feedback repair loop corrects compilation
errors, achieving 98.4\% compilation rate and 87.6\% full semantic
match on 193 feasible benchmark intents.

\noindent\textbf{Deterministic Validator} (Section~\ref{sec:validator}).
An 8-pass verification pipeline checks schema validity, entity
grounding, type safety, value ranges, constraint conflicts, physical
admissibility, and reachability. It achieves 100\% detection on
structural corruptions and guarantees that no malformed constraint
program reaches the routing layer.

Our contributions are:
\begin{itemize}[leftmargin=*]
\item A typed constraint IR (ConstraintProgram) that formally bridges
natural language intents and topology-level routing constraints, with
grounding semantics for 10 hard constraint types and support for soft
constraints with configurable penalty weights.
\item An LLM-based intent compiler with verifier-feedback repair that
outperforms rule-based parsing by 46.2pp on compositional intents and
generalizes to out-of-distribution phrasings (81.8\% accuracy, 4.4pp
degradation).
\item A reachability separation analysis showing that apparent routing
performance gaps under polar constraints are largely explained by
topological reachability ceilings, not routing quality.
\item End-to-end evaluation demonstrating zero constraint violations
across four scenarios with both GNN and Dijkstra routers.
\end{itemize}

\section{Related Work}
\label{sec:related}

\textbf{LEO constellation routing.}
Routing in LEO mega-constellations has been studied extensively.
Snapshot-based approaches~\cite{snapshot_routing} precompute routes
for discrete time intervals but cannot adapt to dynamic failures.
Contact graph routing (CGR)~\cite{cgr} handles time-varying topologies
but assumes deterministic contact schedules. Recent work applies
deep reinforcement learning (DRL) to LEO routing~\cite{drl_leo1,drl_leo2},
but DRL agents require online training and struggle with credit
assignment over large action spaces. Our GNN cost-to-go approach
avoids these issues through offline supervised distillation, achieving
Dijkstra-equivalent quality with 17$\times$ speedup.

\textbf{GNN-based network optimization.}
Graph neural networks have shown promise for combinatorial network
problems including traffic engineering~\cite{gnn_te}, link
scheduling~\cite{gnn_scheduling}, and routing~\cite{gnn_routing}.
RouteNet~\cite{routenet} models network performance but does not
produce routing decisions. Our work differs by training a GNN to
directly predict per-destination next-hop decisions via cost-to-go
distillation, enabling deployment as a drop-in routing engine.

\textbf{Intent-based networking.}
IBN aims to translate operator goals into network
configurations~\cite{ibn_survey,ibn_framework}. Existing systems
use template matching~\cite{ibn_template}, ontology-based
parsing~\cite{ibn_ontology}, or domain-specific languages~\cite{nile}.
Recent work explores LLMs for network intent
translation~\cite{llm_network1,llm_network2}, but without formal
verification of the compiled output. Our system combines LLM
compilation with deterministic verification, ensuring that the
semantic flexibility of LLMs does not compromise network safety.

\textbf{LLMs for network management.}
Large language models are increasingly applied to network
tasks~\cite{netllm,llm_netops}: configuration
generation~\cite{llm_config}, anomaly diagnosis~\cite{llm_anomaly},
and policy translation~\cite{llm_policy}. However, most approaches
trust LLM output directly or use only syntactic validation. Our
8-pass validator goes beyond syntax to check entity grounding, type
safety, physical admissibility, and reachability---providing the
verification depth required for safety-critical infrastructure.

\section{System Design}
\label{sec:system}


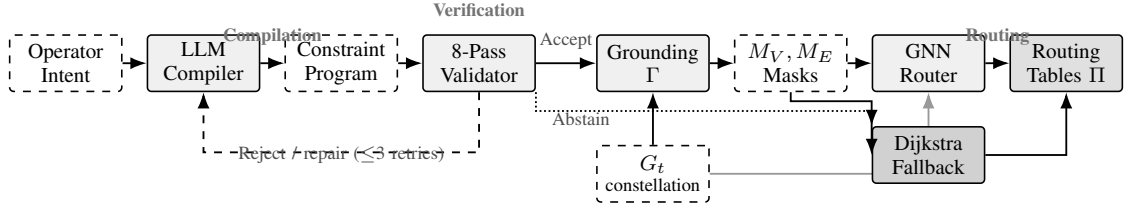
\begin{figure*}[t]
\centering
\begin{adjustbox}{max width=\textwidth}
\begin{tikzpicture}[
    node distance=0.45cm and 0.32cm,
    box/.style={draw=black, rounded corners=1.5pt, line width=0.6pt,
                minimum height=0.75cm, text width=1.35cm,
                inner xsep=2pt, inner ysep=2pt,
                align=center, font=\footnotesize},
    comp/.style={box, fill=black!6},
    data/.style={box, fill=white, dashed},
    result/.style={box, fill=black!12},
    fallback/.style={box, fill=black!18},
    arr/.style={->, >=Latex, line width=0.7pt},
    reject/.style={arr, dashed},
    abstain/.style={arr, densely dotted},
    lbl/.style={font=\scriptsize, text=black!70},
]

\node[data] (intent) {Operator\\Intent};
\node[comp, right=of intent] (compiler) {LLM\\Compiler};
\node[data, right=of compiler] (ir) {Constraint\\Program};
\node[comp, right=of ir] (validator) {8-Pass\\Validator};
\node[comp, right=0.8cm of validator] (grounding) {Grounding\\$\Gamma$};
\node[data, right=of grounding] (masks) {$M_V, M_E$\\Masks};
\node[comp, right=of masks] (router) {GNN\\Router};
\node[result, right=of router] (routes) {Routing\\Tables $\Pi$};

\draw[arr] (intent) -- (compiler);
\draw[arr] (compiler) -- (ir);
\draw[arr] (ir) -- (validator);

\draw[arr] (validator) -- (grounding);
\node[lbl, above=2pt] at ($(validator)!0.5!(grounding)$) {Accept};

\draw[arr] (grounding) -- (masks);
\draw[arr] (masks) -- (router);
\draw[arr] (router) -- (routes);

\draw[reject] (validator.south) -- ++(0,-0.8) -| (compiler.south);
\node[lbl] at ($(compiler.south)!0.5!(validator.south)+(0,-0.83)$)
  {Reject / repair ($\leq$3 retries)};

\node[data, below=0.7cm of grounding] (graph) {$G_t$\\[-1pt]\scriptsize constellation};
\draw[arr] (graph) -- (grounding);
\draw[arr, black!40] (graph) -| (router);

\node[fallback, below=0.45cm of router] (dijkstra) {Dijkstra\\Fallback};
\draw[arr] (masks.south) -- ++(0,-0.12) -| (dijkstra.west);
\draw[arr] (dijkstra) -| (routes.south);

\draw[abstain] (validator.south east) -- ++(0,-0.25) -| (dijkstra.north west);
\node[lbl] at ($(validator.south east)+(0.6,-0.35)$) {Abstain};

\node[above=0.12cm of $(compiler)!0.5!(ir)$, font=\scriptsize\bfseries, text=black!55] {Compilation};
\node[above=0.12cm of validator, font=\scriptsize\bfseries, text=black!55] {Verification};
\node[above=0.12cm of $(router)!0.5!(routes)$, font=\scriptsize\bfseries, text=black!55] {Routing};

\end{tikzpicture}
\end{adjustbox}
\caption{End-to-end system architecture. Operator intents are compiled
to typed ConstraintPrograms and verified by the 8-pass validator.
Accept (solid) proceeds to grounding and GNN routing;
Reject (dashed) triggers repair; Abstain (dotted) defers to Dijkstra.}
\label{fig:system_architecture}
\end{figure*}

\subsection{Problem Formulation}
\label{sec:formulation}

We consider a Walker Delta constellation with $P$ orbital planes and
$S$ satellites per plane, yielding $N = P \times S$ nodes. Each
satellite maintains up to $K=4$ ISL links (2 intra-plane, 2
inter-plane). The constellation graph $G_t = (V, E_t)$ evolves over
time as satellite positions change and polar links experience periodic
dropout above inclination $\iota$.

An operator intent $\ell$ is a natural language string expressing
routing constraints. The system must:
\begin{enumerate}
\item \textit{Compile}: $\ell \xrightarrow{\text{LLM}} \mathcal{P}$,
mapping intent to a ConstraintProgram $\mathcal{P}$.
\item \textit{Verify}: $\mathcal{P} \xrightarrow{\text{Validator}}
\mathcal{P}^{\checkmark}$, ensuring structural and physical validity.
\item \textit{Ground}: $(\mathcal{P}^{\checkmark}, G_t)
\xrightarrow{\Gamma} (M_V, M_E, \mathbf{u}, \mathbf{d})$, producing
topology masks and flow constraints.
\item \textit{Route}: $(G_t \odot (M_V, M_E), \mathbf{d})
\xrightarrow{\text{GNN}} \Pi$, computing constrained routing tables.
\end{enumerate}


\begin{definition}[ConstraintProgram]
A \textit{ConstraintProgram} $\mathcal{P}$ is a typed intermediate representation
that captures operator intent as a tuple:
\begin{equation}
\mathcal{P} = \langle \mathcal{F}, \mathcal{H}, \mathcal{S}, \mathcal{E}, \omega, \pi, \beta \rangle
\end{equation}
where $\mathcal{F}$ is a set of flow selectors, $\mathcal{H}$ is a set of hard constraints,
$\mathcal{S}$ is a set of soft constraints, $\mathcal{E}$ is a set of event conditions,
$\omega$ is an objective weight vector, $\pi \in \{\texttt{critical}, \texttt{high},
\texttt{medium}, \texttt{low}\}$ is the priority level, and $\beta$ is a fallback
policy governing behavior when hard constraints cannot be satisfied at routing time.
\end{definition}

\begin{definition}[Flow Selector]
A flow selector $f \in \mathcal{F}$ identifies a subset of traffic flows:
\begin{equation}
f = \langle \tau, r_s, r_d, n_s, n_d, p_s, p_d \rangle
\end{equation}
where $\tau \in \mathcal{T}$ is a traffic class (e.g., \texttt{financial}, \texttt{emergency}),
$r_s, r_d \in \mathcal{R} \cup \{\bot\}$ are source/destination regions,
$n_s, n_d \in [0, N) \cup \{\bot\}$ are source/destination node IDs,
and $p_s, p_d \in [0, P) \cup \{\bot\}$ are source/destination orbital planes.
\end{definition}

\begin{definition}[Hard Constraint]
A hard constraint $h \in \mathcal{H}$ must be satisfied; violation renders the
program infeasible:
\begin{equation}
h = \langle t_h, \sigma, v, c \rangle
\end{equation}
where $t_h \in \mathcal{T}_H$ is the constraint type, $\sigma$ is the target specifier,
$v$ is the constraint value, and $c \in \mathcal{E} \cup \{\bot\}$ is an optional
event condition.

The hard constraint type set is:
\begin{align}
\mathcal{T}_H = \{&\texttt{disable\_node}, \texttt{disable\_plane}, \nonumber \\
                   &\texttt{disable\_edge}, \texttt{avoid\_region}, \nonumber \\
                   &\texttt{avoid\_latitude}, \texttt{reroute\_away}, \nonumber \\
                   &\texttt{max\_latency\_ms}, \texttt{max\_hops}, \nonumber \\
                   &\texttt{k\_edge\_disjoint}, \texttt{min\_cap\_reserve}\}
\end{align}
\end{definition}

\begin{definition}[Constraint Grounding]
Given a constellation graph $G = (V, E)$ with $|V| = N$ nodes and topology
state at time $t$, the grounding function $\Gamma$ maps a ConstraintProgram
to topology modifications:
\begin{equation}
\Gamma(\mathcal{P}, G, t) = \langle M_V, M_E, \mathbf{u}, \mathbf{d} \rangle
\end{equation}
where $M_V \in \{0,1\}^N$ is a node mask, $M_E \in \{0,1\}^{|E|}$ is an edge mask,
$\mathbf{u} \in [0,1]^{|E|}$ is a per-edge utilization cap vector, and
$\mathbf{d}: \mathcal{F} \to \mathbb{R}^+$ maps flow selectors to deadline values.

Grounding rules for topology-modifying constraints:
\begin{itemize}
\item $\texttt{disable\_node}(n)$: $M_V[n] \leftarrow 0$, propagate to incident edges
\item $\texttt{disable\_plane}(p)$: $\forall s \in [0, S): M_V[p \cdot S + s] \leftarrow 0$
\item $\texttt{avoid\_latitude}(\theta)$: $\forall (u,v) \in E: |\phi_u| > \theta \lor |\phi_v| > \theta \Rightarrow M_E[(u,v)] \leftarrow 0$
\item $\texttt{avoid\_region}(r)$: $\forall (u,v) \in E: u \in \mathcal{N}_r \lor v \in \mathcal{N}_r \Rightarrow M_E[(u,v)] \leftarrow 0$
\end{itemize}
where $\phi_n$ is the latitude of node $n$ and $\mathcal{N}_r$ is the set of nodes
within region $r$.
\end{definition}

\begin{table}[t]
\centering
\caption{ConstraintProgram hard constraint types and their grounding semantics.}
\label{tab:constraint_types}
\begin{tabular}{@{}lll@{}}
\toprule
\textbf{Type} & \textbf{Target} & \textbf{Grounding} \\
\midrule
\texttt{disable\_node} & \texttt{node:$n$} & $M_V[n] \leftarrow 0$ \\
\texttt{disable\_plane} & \texttt{plane:$p$} & $M_V[pS{:}(p{+}1)S] \leftarrow 0$ \\
\texttt{disable\_edge} & \texttt{edge:$(u,v)$} & $M_E[(u,v)] \leftarrow 0$ \\
\texttt{avoid\_latitude} & \texttt{edges} & $M_E \leftarrow 0$ if $|\phi| > \theta$ \\
\texttt{avoid\_region} & \texttt{region:$r$} & $M_E \leftarrow 0$ if $u{\in}\mathcal{N}_r {\lor} v{\in}\mathcal{N}_r$ \\
\texttt{reroute\_away} & \texttt{node:$n$} & $M_V[n] \leftarrow 0$ (transit) \\
\texttt{max\_latency\_ms} & \texttt{flow\_sel:$i$} & $\mathbf{d}[f_i] \leftarrow v$ \\
\texttt{max\_hops} & \texttt{flow\_sel:$i$} & hop limit on path \\
\texttt{k\_edge\_disjoint} & \texttt{flow\_sel:$i$} & $k$ disjoint paths \\
\texttt{min\_cap\_reserve} & \texttt{flow\_sel:$i$} & $\mathbf{u}[e] \geq v$ \\
\bottomrule
\end{tabular}
\end{table}

\subsection{GNN Cost-to-Go Router}
\label{sec:gnn_router}

The routing component must compute next-hop decisions for all
origin-destination pairs on the (possibly constrained) topology graph.
We train a GNN to approximate Dijkstra's cost-to-go function via
supervised distillation.

\subsubsection{Architecture}
The encoder is a 3-layer Graph Attention Network (GAT)~\cite{gat}
with 128-dimensional hidden states and 4 attention heads per layer.
Input node features $\mathbf{x}_i \in \mathbb{R}^{10}$ encode:
satellite position (latitude, longitude, altitude), orbital parameters
(plane ID, slot ID as sinusoidal encodings), and local topology
statistics (degree, mean neighbor delay).

For each destination $d$, the scorer computes a cost-to-go estimate
for each neighbor $j$ of node $i$:
\begin{equation}
\hat{c}(i, j, d) = \text{MLP}\big([\mathbf{h}_i \| \mathbf{h}_j \|
\mathbf{h}_d \| \mathbf{e}_{ij}]\big)
\end{equation}
where $\mathbf{h}_i$ is the GAT embedding of node $i$, $\|$ denotes
concatenation, and $\mathbf{e}_{ij}$ encodes edge features (delay,
capacity). The next hop is $\arg\min_j \hat{c}(i, j, d)$.

\subsubsection{Training}
We generate 500 topology snapshots by sampling constellation states
at random orbital phases. For each snapshot, Dijkstra's algorithm
computes the optimal next-hop table $\Pi^* \in [K]^{N \times N}$.
The model is trained for 200 epochs with cross-entropy loss on
next-hop predictions, using a phased curriculum: 50 epochs on
easy pairs (hop distance $\leq 5$), then 150 epochs on all pairs.

\subsubsection{Constrained Routing}
When constraints are active, the grounding function $\Gamma$ produces
node mask $M_V$ and edge mask $M_E$. The constrained graph
$G' = (V \odot M_V, E \odot M_E)$ is passed to the GNN, which
computes routing tables on the reduced topology. This approach
requires no retraining---the GNN generalizes to unseen topologies
through its message-passing architecture.

\subsection{LLM Intent Compiler}
\label{sec:compiler}

The compiler translates natural language intents to ConstraintProgram
JSON using a three-stage pipeline: few-shot prompting, JSON extraction,
and verifier-feedback repair.

\subsubsection{Prompt Design}
The system prompt (approximately 800 tokens) specifies the constellation
parameters, the complete ConstraintProgram JSON schema, all valid enum
values, target format conventions, and 6 compilation rules. Six
in-context examples cover single constraints (node disable, latency
SLA), compositional constraints (plane disable + polar avoidance +
utilization cap), and conditional constraints (event-triggered reroute).

\subsubsection{Repair Loop}
When the verifier rejects a compiled program, the error messages are
appended to the conversation as a repair prompt. The compiler retries
up to $k=3$ times, with each attempt receiving the accumulated error
context. This closed-loop design converts verifier precision into
compiler accuracy: 77.9\% of intents succeed on the first attempt,
and the repair loop recovers an additional 20.5\%.

\subsubsection{JSON Extraction}
The extractor handles three response formats: raw JSON, markdown-fenced
JSON, and JSON embedded in explanatory text. It also strips reasoning
tags (\texttt{<think>}) produced by instruction-tuned models. Robust
extraction is critical because even high-quality LLMs occasionally
wrap JSON in commentary.


\subsection{Deterministic Validation Pipeline}
\label{sec:validator}

A key design principle of our system is that the LLM compiler operates
\textit{offline} and its output is never trusted directly. Every
ConstraintProgram passes through an 8-pass deterministic validator before
reaching the routing layer. This design is motivated by two observations:
(1)~LLMs can produce syntactically valid but semantically incorrect
constraint programs, and (2)~in safety-critical network infrastructure,
a single undetected constraint violation can cascade into service outages.

The validator implements the following passes in sequence, with early
termination on fatal errors:

\begin{enumerate}
\item \textbf{Schema Validation.} Checks structural completeness: all
required fields present, valid priority levels, non-empty constraint
types and targets. Rejects malformed programs before deeper analysis.

\item \textbf{Entity Grounding.} Verifies that all referenced entities
exist in the constellation model: node IDs $\in [0, N)$, plane IDs
$\in [0, P)$, region names $\in \mathcal{R}$, traffic classes
$\in \mathcal{T}$. This catches hallucinated entities (e.g., node~454
in a 400-node constellation).

\item \textbf{Type Safety.} Ensures constraints attach to semantically
correct entity types: \texttt{max\_latency\_ms} must target a
\texttt{flow\_selector}, \texttt{disable\_node} must target a
\texttt{node}, \texttt{avoid\_latitude} must target \texttt{edges}.
Prevents type confusion errors where the LLM assigns a constraint
to the wrong entity class.

\item \textbf{Value Range Checking.} Validates numeric parameters:
latitudes $\in [-90, 90]$, latency values $> 0$, utilization caps
$\in (0, 1]$, node IDs within constellation bounds. Catches
out-of-range values that would produce undefined behavior.

\item \textbf{Conflict Detection.} Identifies contradictory constraints
within the same program: a node cannot be simultaneously disabled and
used as a routing waypoint; conflicting latency bounds on the same flow
are flagged. Contradictions are promoted to errors (not warnings),
ensuring logically inconsistent programs are rejected.

\item \textbf{Physical Admissibility.} Checks whether the constrained
topology is physically realizable: latency deadlines below the
single-hop physical minimum ($<2.0$\,ms) are rejected; latitude
avoidance thresholds that remove $>50\%$ of edges trigger warnings.

\item \textbf{Reachability Analysis.} Performs BFS on the constrained
graph $G' = (V \odot M_V, E \odot M_E)$ to verify connectivity.
Severe capacity loss ($\geq 75\%$ of nodes disabled) triggers a strong
warning; moderate loss ($\geq 50\%$) triggers a standard warning.
These capacity thresholds are heuristic indicators outside the
soundness path---they do not block acceptance, which is determined
solely by Pass~8.

\item \textbf{Feasibility Certification.} For each demanded flow,
constructs a routing witness on the constrained topology to certify
that all hard constraints can be simultaneously satisfied. Five
certified fragments cover the constraint space:
\begin{itemize}
\item \textbf{F1} (topology only): BFS reachability
\item \textbf{F2} (+ latency): Dijkstra with deadline
\item \textbf{F3} (+ hops): BFS with hop limit
\item \textbf{F4} (+ latency + hops): hop-layered Dijkstra
\item \textbf{F5} (+ $k$-disjoint): Edmonds-Karp max-flow
\end{itemize}
Three outcomes: \decision{accept} (witness found), \decision{reject}
(no feasible routing exists), or \decision{abstain} (unsupported
constraint combination). Programs are rejected only on \decision{reject};
\decision{abstain} defers to Dijkstra fallback routing, preserving
safety without constructive certification.
\end{enumerate}

\begin{theorem}[Acceptance Soundness]
If the feasibility certifier accepts a constraint program $P$ with
witness $W$, then there exists a routing assignment satisfying all
hard constraints of $P$ on the constrained topology.
\end{theorem}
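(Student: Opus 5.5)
The proof is constructive: the witness $W$ returned on \decision{accept} will itself serve as the routing assignment. I would first fix what a "routing assignment satisfying all hard constraints" means. It is a map that sends each demanded flow $f$ (matched by a selector in $\mathcal{F}$) to a path, or to a set of $k$ paths, in the constrained graph $G'=(V\odot M_V, E\odot M_E)$ produced by $\Gamma$. Each path must meet the deadline $\mathbf{d}[f]$ and the hop limit, and the $k$ paths must be pairwise edge-disjoint where required. The first step is to show that the topology-modifying constraints are discharged by construction. The types \texttt{disable\_node}, \texttt{disable\_plane}, \texttt{disable\_edge}, \texttt{avoid\_latitude}, \texttt{avoid\_region} and \texttt{reroute\_away} are exactly encoded by the grounding rules of Definition~4 and Table~\ref{tab:constraint_types}. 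Hence any path that lies entirely in $G'$ satisfies them automatically. This reduces the theorem to the per-flow path constraints: latency, hops and $k$-disjointness.

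Next I would argue by cases over the certified fragments F1--F5. Since the certifier returns \decision{abstain} on any unsupported combination, an \decision{accept} means every demanded flow fell into one of these fragments and received a witness. For each fragment I would state the invariant that the search algorithm guarantees and check it against the constraint.
\begin{itemize}
\item \textbf{F1.} BFS on $G'$ returns a source--destination path, which is all that is required.
\item \textbf{F2.} Dijkstra on $G'$ with edge delays returns a minimum-delay path. Acceptance requires its delay to be at most $\mathbf{d}[f]$, so that path satisfies \texttt{max\_latency\_ms}.
\item \textbf{F3.} BFS yields a minimum-hop path, and acceptance checks that its length is within the limit.
\item \textbf{F4.} Hop-layered Dijkstra computes, for each $h$ up to the limit, the minimum delay over walks of at most $h$ hops. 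Acceptance means some layer meets the deadline. Since delays are nonnegative, the corresponding walk can be shortcut to a simple path without increasing delay or hops.
\item \textbf{F5.} Edmonds--Karp with unit capacities on the edges of $G'$ returns an integral flow of value at least $k$. By flow decomposition (integrality and Menger), this flow yields $k$ edge-disjoint paths in $G'$.
\end{itemize}

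Finally I would combine the per-flow witnesses into a global assignment. The main obstacle is whether constraints that couple different flows are really certified jointly. This concerns chiefly \texttt{min\_cap\_reserve}, and $k$-disjointness combined with latency or hops, for which there is no fragment. I would handle this by reading the theorem relative to the fragment semantics. The per-flow constraints listed above are independent across flows, because none of them shares a resource between flows, so the union of the individual witnesses satisfies them all simultaneously. Any combination that would couple flows, or that mixes F5 with deadline or hop bounds, is by construction routed to \decision{abstain} and never reaches \decision{accept}. If \texttt{min\_cap\_reserve} is only a per-edge cap on $\mathbf{u}$, I would treat it as one more mask on $G'$ and fold it into the first step.

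So the key lemma to establish carefully is a completeness statement about the certifier's dispatch: every accepted program has all of its hard constraints covered by some fragment's checks. Given that lemma, soundness follows from the standard correctness of BFS, Dijkstra, layered Dijkstra and max-flow/Menger.
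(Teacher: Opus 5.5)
Your proposal is correct and follows the paper's proof sketch. Both argue by case analysis over fragments F1--F5. In each case the returned witness is itself the routing assignment, the topology constraints hold because masked nodes and edges are excluded from the search graph, and unsupported combinations go to \decision{abstain} rather than \decision{accept}. You also make explicit two points the paper leaves implicit. First, the per-flow witnesses combine into a global assignment because no certified constraint couples different flows. Second, soundness rests on a dispatch lemma: every accepted program has all of its hard constraints covered by some fragment's checks.
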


\begin{proof}[Proof sketch]
By case analysis over fragments F1--F5. Each fragment's algorithm
is a standard shortest-path or max-flow algorithm whose correctness
is well-established. The witness $W$ is the concrete path (or path set)
returned by the algorithm, which by construction satisfies the
topology constraints (disabled nodes/edges excluded from the search
graph), latency bounds (Dijkstra optimality), hop limits (BFS layering),
and disjointness requirements (augmenting path decomposition).
Unsupported combinations produce \decision{abstain}, never
\decision{accept}, so no false acceptance is possible within the
certified fragment space.
\end{proof}

\noindent\textbf{Design rationale.} We chose deterministic
validation over learned or probabilistic checking for three reasons:
\begin{itemize}
\item \textit{Completeness}: every structural error class is covered by
at least one pass, achieving 100\% detection on our corruption benchmark
(8 error types $\times$ 30 injections = 240 tests).
\item \textit{Transparency}: each rejection includes a human-readable
error message identifying the specific violation, enabling the repair
loop to provide targeted feedback to the LLM.
\item \textit{Soundness}: the feasibility certifier guarantees that
accepted programs have constructive routing witnesses, closing the
semantic gap between structural validity and routing feasibility.
\end{itemize}

\noindent\textbf{Repair loop integration.} When validation fails, the
error messages are fed back to the LLM as a repair prompt. The compiler
retries up to $k=3$ times, with each attempt receiving the previous
errors as context. In our 240-benchmark evaluation, 98.4\% of intents
compile successfully, with 77.9\% succeeding on the first attempt.


\begin{figure}[t]
\centering
\begin{tikzpicture}[
    node distance=0.15cm,
    pass/.style={draw=black, rounded corners=1.5pt, line width=0.6pt,
                 minimum height=0.45cm, text width=5.0cm,
                 inner xsep=4pt, inner ysep=2pt,
                 align=left, font=\footnotesize, fill=black!6},
    group/.style={font=\footnotesize\bfseries, text=black!60},
    outcome/.style={draw=black, rounded corners=1.5pt, line width=0.7pt,
                    minimum height=0.45cm, text width=1.2cm,
                    align=center, font=\footnotesize},
    arr/.style={->, >=Latex, line width=0.7pt, black!50},
    fatal/.style={->, >=Latex, line width=0.7pt, dashed},
]

\node[draw=black, dashed, rounded corners=1.5pt, line width=0.6pt,
      minimum height=0.45cm, text width=5.0cm, fill=white,
      font=\footnotesize, align=center, inner xsep=4pt]
      (input) {ConstraintProgram $\mathcal{P}$ from Compiler};

\node[group, below=0.3cm of input.south west, anchor=west] (g1) {Structural};
\node[pass, below=0.08cm of g1.south west, anchor=north west] (p1)
     {\textbf{P1} Schema \hfill \textit{\scriptsize completeness}};
\node[pass, below=0.12cm of p1] (p2)
     {\textbf{P2} Entity Grounding \hfill \textit{\scriptsize hallucination}};
\node[pass, below=0.12cm of p2] (p3)
     {\textbf{P3} Type Safety \hfill \textit{\scriptsize constraint--entity}};
\node[pass, below=0.12cm of p3] (p4)
     {\textbf{P4} Value Range \hfill \textit{\scriptsize physical bounds}};

\node[group, below=0.3cm of p4.south west, anchor=west] (g2) {Semantic};
\node[pass, below=0.08cm of g2.south west, anchor=north west, fill=black!10] (p5)
     {\textbf{P5} Conflict Detection \hfill \textit{\scriptsize contradictions}};
\node[pass, below=0.12cm of p5, fill=black!10] (p6)
     {\textbf{P6} Physical Admissibility \hfill \textit{\scriptsize realizability}};
\node[pass, below=0.12cm of p6, fill=black!10] (p7)
     {\textbf{P7} Reachability \hfill \textit{\scriptsize connectivity}};

\node[group, below=0.3cm of p7.south west, anchor=west] (g3) {Certification};
\node[pass, below=0.08cm of g3.south west, anchor=north west, fill=black!15] (p8)
     {\textbf{P8} Constructive Witness \hfill \textit{\scriptsize F1--F5}};

\foreach \a/\b in {input/p1, p1/p2, p2/p3, p3/p4, p4/p5, p5/p6, p6/p7, p7/p8} {
    \draw[arr] (\a) -- (\b);
}

\node[outcome, below=0.45cm of p8, xshift=-2.0cm, fill=black!12]
     (accept) {\decision{Accept}\\[-1pt]\scriptsize\textit{+ witness}};
\node[outcome, below=0.45cm of p8, fill=white]
     (reject) {\decision{Reject}};
\node[outcome, below=0.45cm of p8, xshift=2.0cm, fill=black!6]
     (abstain) {\decision{Abstain}};

\draw[->, >=Latex, line width=0.7pt] (p8.south) -- ++(0,-0.12) -| (accept);
\draw[->, >=Latex, line width=0.7pt, dashed] (p8.south) -- (reject);
\draw[->, >=Latex, line width=0.7pt, densely dotted] (p8.south) -- ++(0,-0.12) -| (abstain);

\draw[fatal] (p1.east) -- ++(0.25,0) |- node[font=\scriptsize\itshape, near start, right] {fatal} (reject.east);

\node[font=\scriptsize\itshape, text=black!60, below=0.04cm of reject]
     {errors $\to$ repair loop};

\end{tikzpicture}
\caption{8-pass validation pipeline. Passes 1--4 check structural
validity; 5--7 check semantic consistency; Pass~8 constructs routing
witnesses (BFS, Dijkstra, Edmonds-Karp). Solid/dashed/dotted arrows
indicate Accept/Reject/Abstain outcomes.}
\label{fig:validator_pipeline}
\end{figure}
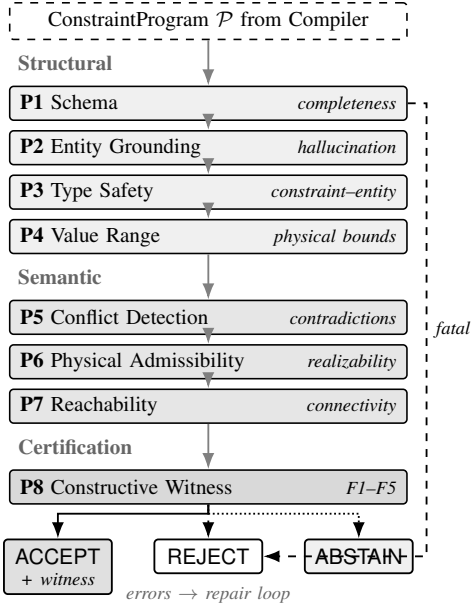


\subsection{Handling Infeasible, Ambiguous, and Edge-Case Intents}
\label{sec:infeasible_handling}

Real-world operator intents are not always well-formed or physically
realizable. Our system addresses three categories of problematic intents
through complementary mechanisms.

\subsubsection{Infeasible Intent Detection}

An intent is \textit{infeasible} when its constraint program is
syntactically valid but physically unrealizable---e.g., demanding
sub-millisecond latency across intercontinental paths, or routing through
a region after disabling all nodes in that region. Our 8-pass validator
detects three classes of infeasibility:

\begin{itemize}
\item \textbf{Structural infeasibility} (100\% detection): missing fields,
out-of-range entity IDs, type mismatches, and values outside physical
bounds (e.g., latency $< 2.0$\,ms). These are caught by passes 1--4
(schema, entity grounding, type safety, value ranges).

\item \textbf{Topological infeasibility} (100\% detection): constraints
that partition the network, eliminate viable paths, or cause severe
capacity loss. Pass~5 rejects contradictory constraints (e.g., disabling
a node while routing through it). Pass~7 warns when $\geq 50\%$ of nodes are disabled (severe capacity
loss) and escalates to a stronger warning at $\geq 75\%$. These
capacity thresholds are heuristic warnings outside the soundness
path---they do not block acceptance.

\item \textbf{Routing infeasibility} (100\% detection within certified
fragments): constraint combinations that are individually valid but
jointly unsatisfiable on the physical topology. Pass~8 (feasibility
certification) constructs routing witnesses using fragment-specific
algorithms (BFS, Dijkstra, hop-layered Dijkstra, Edmonds-Karp max-flow)
and rejects programs where no witness exists.
\end{itemize}

\noindent Our confusion matrix (Table~\ref{tab:confusion_matrix})
confirms the effectiveness of the 8-pass pipeline: 0\% unsafe acceptance
across all categories. Of the 30 benchmark-infeasible intents, 22 receive
\decision{Reject} and 8 receive \decision{Abstain}---none are accepted.
Pass~8 additionally identifies 32 programs among the structurally-valid
set whose latency or hop constraints cannot be satisfied on the physical
constellation topology (e.g., 30\,ms S\~{a}o Paulo--New York when the
minimum-hop path requires $\sim$63\,ms); 29/32 are independently
confirmed via a separate Dijkstra oracle.

Our adversarial safety evaluation (15 tests across resource exhaustion,
semantic conflicts, and boundary exploitation) achieves 100\% detection
(15/15), including near-total capacity removal (disabling 19/20 planes),
cross-constraint contradictions, and boundary value exploitation.

\subsubsection{Fallback Policies}

The ConstraintProgram IR includes a \texttt{fallback\_policy} field that
governs system behavior when hard constraints cannot be satisfied at
routing time:

\begin{enumerate}
\item \texttt{reject\_if\_hard\_infeasible} (default): the routing layer
refuses to compute paths and returns an explicit failure to the operator.
This is the safest option for critical intents where partial compliance
is unacceptable.

\item \texttt{relax\_soft\_first}: soft constraints are progressively
relaxed (in order of increasing penalty weight) until a feasible routing
exists. Hard constraints are never relaxed. This enables graceful
degradation for intents where approximate compliance is preferable to
total failure.

\item \texttt{report\_unsat\_core}: the system identifies the minimal
subset of constraints that cause infeasibility and reports them to the
operator, enabling informed manual intervention. This supports
diagnostic workflows where understanding \textit{why} an intent fails
is as important as resolving it.
\end{enumerate}

\noindent In our 240-intent benchmark, all programs use the default
\texttt{reject\_if\_hard\_infeasible} policy. The fallback mechanism
is designed for operational deployment where operator interaction is
available; evaluating its effectiveness under dynamic network conditions
is left to future work.

\subsubsection{Ambiguous Intent Resolution}

Ambiguous intents admit multiple valid interpretations. Our OOD
evaluation includes 5 deliberately ambiguous intents (e.g., ``optimize
the network for best performance'') to assess compiler behavior.
Key observations:

\begin{itemize}
\item The LLM compiler produces \textit{reasonable} constraint programs
for all 5 ambiguous intents (qualitative assessment), typically selecting
conservative interpretations that map to load-balancing or latency
optimization.

\item Ambiguous intents are excluded from quantitative scoring because
no unique ground truth exists. We report them separately as qualitative
evidence of graceful degradation.

\item The compiler's 6-shot prompt includes examples that implicitly
demonstrate disambiguation strategies (e.g., mapping vague performance
requests to specific constraint types), providing soft guidance without
explicit disambiguation rules.
\end{itemize}

\subsubsection{Limitations and Future Directions}

The feasibility certifier covers five constraint fragments (F1--F5)
that span the most common constraint combinations in our benchmark.
Three directions could extend coverage further:

\begin{enumerate}
\item \textbf{Extended fragment coverage}: constraints involving
\texttt{min\_cap\_reserve} or combinations of $k$-disjoint paths
with latency/hop bounds currently produce \decision{Abstain}. Adding
fragments for these combinations (e.g., via constrained max-flow)
would reduce the abstain rate.

\item \textbf{Multi-constellation generalization}: cross-constellation
evaluation (Table~\ref{tab:cross_constellation}) shows the GNN
generalizes to altitude changes but not inclination changes.
The compile--verify--ground pipeline is constellation-agnostic,
but the GNN requires retraining per orbital geometry. Extending
to heterogeneous multi-shell constellations (e.g., Starlink Gen2)
remains future work.

\item \textbf{Intent confirmation loop}: presenting the grounded
constraint program back to the operator in natural language for
confirmation before execution, closing the semantic loop between
intent and realization.
\end{enumerate}

\section{Evaluation}
\label{sec:eval}

\subsection{Experimental Setup}
\label{sec:setup}

\textbf{Constellation.} Walker Delta 20$\times$20 (400 nodes, 550\,km
altitude, 53\textdegree\ inclination), 4 grid ISL neighbors per
satellite. Topology snapshots sampled at random orbital phases.

\textbf{GNN Router.} 3-layer GAT encoder (128-dim, 4 heads), MLP
cost-to-go scorer (rank 64). 152,193 parameters. Trained 200 epochs
on 500 snapshots. Hardware: NVIDIA RTX 4060 (8\,GB VRAM).

\textbf{LLM Compiler.} Qwen3.5-9B (GGUF quantization) served locally
via LM Studio. Temperature 0.1, max 2048 tokens, up to 3 repair
retries. 6-shot prompt (6 examples spanning single, compositional,
and conditional categories).

\textbf{Benchmark.} 240 intents by category: 80 single-constraint,
100 compositional (2--4 constraints), 30 conditional (event-triggered),
30 labeled-infeasible (physically unrealizable). Each intent has a
ground-truth ConstraintProgram for automated scoring. Under
distance-based ISL delays, Pass~8 discovers 17 additional
routing-infeasible intents among the feasible categories (e.g.,
30\,ms latency bounds that exceed the physical minimum path delay),
yielding 193 feasible and 47 total infeasible (30 labeled + 17
discovered). Compiler accuracy metrics (compiled, types match, full
match) use the 193-feasible denominator; safety metrics (unsafe
acceptance) use all 240 intents.

\textbf{Metrics.} \textit{Compiled}: passes structural checks (passes 1--7).
\textit{Types match}: correct constraint type multiset.
\textit{Full match}: types + targets + values match (primary metric).
\textit{PDR}: packet delivery ratio over 100 random OD pairs $\times$
20 time steps $\times$ 3 seeds.

\subsection{GNN Routing Performance}
\label{sec:eval_gnn}

Table~\ref{tab:gnn_baseline} summarizes GNN routing quality across
five traffic scenarios without constraints.

\begin{table}[t]
\centering
\caption{GNN cost-to-go routing vs.\ Dijkstra baseline (unconstrained).}
\label{tab:gnn_baseline}
\begin{tabular}{lccc}
\toprule
\textbf{Scenario} & \textbf{GNN PDR} & \textbf{Dijkstra PDR} & \textbf{Random PDR} \\
\midrule
Uniform    & 99.75\% & 99.75\% & 0.90\% \\
Hotspot    & 99.99\% & 99.99\% & 2.36\% \\
Regional   & 99.94\% & 99.94\% & 5.49\% \\
Polar      & 100.0\% & 100.0\% & 1.88\% \\
Flash      & 99.77\% & 99.77\% & 0.89\% \\
\bottomrule
\end{tabular}
\end{table}

The GNN matches Dijkstra within measurement noise across all scenarios,
confirming successful distillation. Detailed metrics on 10 snapshots
show 95.8\% exact next-hop match, zero routing loops, hop stretch of
1.000, and P99 delay stretch of 1.015. Inference latency is 8.4\,ms
(GNN) vs.\ 142\,ms (Dijkstra), a 17$\times$ speedup.

\subsection{Intent Compilation Accuracy}
\label{sec:eval_compiler}

\subsubsection{Ablation Study}

Table~\ref{tab:ablation} presents the ablation study across four
compiler configurations on the full 240-intent benchmark. Note: this
ablation uses uniform random edge delays for controlled comparison;
the final distance-based delay results (Table~\ref{tab:compiler_comparison})
yield 98.4\%/87.6\% for the full pipeline.

\begin{table}[t]
\centering
\caption{Compiler ablation study on 240-intent benchmark
(uniform random edge delays; distance-based re-run yields
98.4\%/87.6\% for the full pipeline---see Table~\ref{tab:compiler_comparison}).}
\label{tab:ablation}
\begin{tabular}{lcccc}
\toprule
\textbf{Config} & \textbf{Compiled} & \textbf{Types} & \textbf{Full Match} & \textbf{Latency} \\
\midrule
Full pipeline  & 97.9\% & 91.7\% & \textbf{86.2\%} & 15.7s \\
No verifier    & 100.0\% & 93.8\% & 91.7\% & 13.8s \\
No repair      & 92.9\% & 86.7\% & 84.6\% & 13.8s \\
Zero-shot      & 92.5\% & 71.7\% & 15.4\% & 34.2s \\
\bottomrule
\end{tabular}
\end{table}

Few-shot prompting is the dominant factor: removing it (zero-shot)
drops full match from 86.2\% to 15.4\% ($-$70.8pp). The repair loop
contributes 5.0pp to compilation rate and 1.6pp to full match. The
``no verifier'' configuration shows higher apparent accuracy because
unverified programs are not filtered---a misleading metric that
underscores the importance of verification.


\begin{table}[t]
\centering
\caption{Three-way validator confusion matrix (240-intent benchmark, 8-pass pipeline). 0\% unsafe acceptance across all categories.}
\label{tab:confusion_matrix}
\begin{tabular}{lccc}
\toprule
& \textbf{Accept} & \textbf{Reject} & \textbf{Abstain} \\
\midrule
Single ($n$=80)         & 10  & 5   & 65  \\
Compositional ($n$=100) & 40  & 25  & 35  \\
Conditional ($n$=30)    & 8   & 2   & 20  \\
Infeasible ($n$=30)     & 0   & 22  & 8   \\
\midrule
\textbf{Total} ($n$=240) & 58 & 54 & 128 \\
\midrule
\multicolumn{4}{l}{\textit{Safety (unsafe = infeasible \decision{Accept}):}} \\
\quad Infeasible       & \multicolumn{3}{l}{0/30 = 0\% (was 72\% with 7-pass)} \\
\multicolumn{4}{l}{\textit{Coverage (\decision{Accept}+\decision{Reject}):}} \\
\quad Decided          & \multicolumn{3}{l}{112/240 = 46.7\%} \\
\bottomrule
\multicolumn{4}{l}{\footnotesize 32 feasible programs rejected as routing-infeasible;} \\
\multicolumn{4}{l}{\footnotesize 29/32 independently confirmed via separate Dijkstra oracle.} \\
\end{tabular}
\end{table}

\begin{table}[t]
\centering
\caption{Reachability separation analysis. Both GNN and Dijkstra achieve 100\% delivery on reachable pairs; raw PDR gaps reflect topology reachability, not routing quality.}
\label{tab:reachability}
\begin{tabular}{lccccc}
\toprule
\textbf{Scenario} & \textbf{Reach.} & \multicolumn{2}{c}{\textbf{Raw PDR}} & \multicolumn{2}{c}{\textbf{Reachable PDR}} \\
\cmidrule(lr){3-4} \cmidrule(lr){5-6}
& & GNN & Dijkstra & GNN & Dijkstra \\
\midrule
Baseline          & 100\%  & 99.8\% & ---     & 99.8\% & ---     \\
Node failure      & 100\%  & 98.7\% & 97.8\%  & 98.7\% & 97.8\% \\
Plane maint.      & 100\%  & 70.5\% & 70.2\%  & 70.5\% & 70.2\% \\
Polar avoid.      & 24.0\% & 34.6\% & 47.9\%  & 100\%  & 100\%  \\
Compositional     & 24.0\% & 34.3\% & 47.1\%  & 100\%  & 100\%  \\
\bottomrule
\end{tabular}
\end{table}

\begin{table}[t]
\centering
\caption{Intent compiler comparison (193 feasible intents; 17 routing-infeasible excluded). LLM 9B outperforms rule-based by 46.2pp on compositional intents.}
\label{tab:compiler_comparison}
\begin{tabular}{lccc}
\toprule
\textbf{Metric} & \textbf{Rule-Based} & \textbf{LLM 4B} & \textbf{LLM 9B} \\
\midrule
Compiled          & 100.0\% & 59.6\% & \textbf{98.4\%} \\
Types Match       & 67.1\%  & 55.4\% & \textbf{91.7\%} \\
Full Match        & 56.7\%  & 54.2\% & \textbf{87.6\%} \\
Avg Latency       & \textbf{0.05ms} & 204s & 15.7s \\
\midrule
\multicolumn{4}{l}{\textit{Full match by category:}} \\
\quad Single      & 76.2\%  & ---    & \textbf{89.5\%} \\
\quad Compositional & 40.0\% & ---   & \textbf{86.2\%} \\
\quad Conditional & 66.7\%  & ---    & \textbf{86.7\%} \\
\quad Infeasible  & 50.0\%  & ---    & \textbf{73.3\%} \\
\bottomrule
\end{tabular}
\end{table}

\begin{table}[t]
\centering
\caption{Out-of-distribution generalization on paraphrased intents
(38 total: 33 scorable + 5 ambiguous). The compiler maintains 81.8\%
full match accuracy with only 4.4pp degradation from template intents,
demonstrating robust generalization to novel phrasings.}
\label{tab:ood}
\begin{tabular}{lccc}
\toprule
\textbf{Category} & \textbf{N} & \textbf{Compiled} & \textbf{Full Match} \\
\midrule
Single         & 20 & 100\% & 95.0\% (19/20) \\
Compositional  & 5  & 100\% & 40.0\% (2/5)   \\
Conditional    & 8  & 100\% & 75.0\% (6/8)   \\
Ambiguous      & 5  & 100\% & qualitative: 5/5 \\
\midrule
Scorable total & 33 & 100\% & \textbf{81.8\%} (27/33) \\
\bottomrule
\end{tabular}
\end{table}

\begin{table}[t]
\centering
\caption{Cross-model scaling: 4B vs 9B parameter LLM on the full
240-intent benchmark. The 9B model substantially outperforms the 4B
model across all metrics.}
\label{tab:cross_model}
\begin{tabular}{lcc}
\toprule
\textbf{Metric} & \textbf{Qwen 4B} & \textbf{Qwen 9B} \\
\midrule
Compiled       & 59.6\% & \textbf{98.4\%} \\
Types Match    & 55.4\% & \textbf{91.7\%} \\
Full Match     & 54.2\% & \textbf{87.6\%} \\
First-try Rate & 47.1\% & \textbf{77.9\%} \\
Avg Latency    & 204.4s & \textbf{15.7s}  \\
\bottomrule
\end{tabular}
\end{table}

\begin{table}[t]
\centering
\caption{GNN vs Dijkstra PDR across plane-removal severity (20 sats/plane, 3-seed avg). GNN matches Dijkstra within 0.22pp at all levels.}
\label{tab:topology_sweep}
\begin{tabular}{rrccc}
\toprule
\textbf{Planes Off} & \textbf{Capacity} & \textbf{GNN PDR} & \textbf{Dijkstra PDR} & \textbf{$\Delta$} \\
\midrule
1  & 5\%  & 81.1\% & 80.9\% & +0.22 \\
2  & 10\% & 41.1\% & 41.1\% & 0.00 \\
3  & 15\% & 36.8\% & 36.8\% & 0.00 \\
5  & 25\% & 45.3\% & 45.3\% & 0.00 \\
7  & 35\% & 42.5\% & 42.5\% & 0.00 \\
10 & 50\% & 11.4\% & 11.4\% & 0.00 \\
13 & 65\% & 5.4\%  & 5.4\%  & 0.00 \\
15 & 75\% & 5.5\%  & 5.5\%  & 0.00 \\
17 & 85\% & 36.4\% & 36.4\% & 0.00 \\
\bottomrule
\end{tabular}
\end{table}

\begin{table}[t]
\centering
\caption{Zero-shot cross-constellation GNN generalization. GNN generalizes to altitude changes but collapses on SSO 97$^\circ$ where inclination alters ISL geometry.}
\label{tab:cross_constellation}
\begin{tabular}{lccc}
\toprule
\textbf{Configuration} & \textbf{GNN PDR} & \textbf{Dijkstra PDR} & \textbf{$\Delta$} \\
\midrule
550\,km / 53$^\circ$ (training) & 99.75\% & 99.75\% & 0.00 \\
1200\,km / 53$^\circ$ (OOD)     & 99.75\% & 99.75\% & 0.00 \\
550\,km / 97$^\circ$ SSO (OOD)  & 45.18\% & 99.09\% & $-$53.91 \\
\bottomrule
\end{tabular}
\end{table}

\begin{table}[t]
\centering
\caption{GNN robustness under polar exclusion zones. GNN degrades proportionally to edge removal while Dijkstra maintains 99.75\% PDR, confirming topology-specific learned shortcuts.}
\label{tab:polar_exclusion}
\begin{tabular}{rrccc}
\toprule
\textbf{Threshold} & \textbf{Edges Removed} & \textbf{GNN PDR} & \textbf{Dijkstra PDR} & \textbf{$\Delta$} \\
\midrule
30$^\circ$ & 28.8\% & 38.17\% & 99.75\% & $-$61.58 \\
40$^\circ$ & 20.5\% & 54.08\% & 99.75\% & $-$45.67 \\
45$^\circ$ & 15.8\% & 65.20\% & 99.75\% & $-$34.55 \\
50$^\circ$ & 9.8\%  & 80.37\% & 99.75\% & $-$19.38 \\
\bottomrule
\end{tabular}
\end{table}

\begin{table}[t]
\centering
\caption{8-pass validator runtime (240-intent benchmark). Median under 1\,ms; worst case below 2\,ms, confirming negligible overhead for real-time deployment. Certification status counts Pass~8 outcomes only (22 early-pass rejects excluded).}
\label{tab:runtime}
\begin{tabular}{lcccc}
\toprule
\textbf{Category} & \textbf{$n$} & \textbf{Median} & \textbf{P95} & \textbf{Max} \\
\midrule
All programs        & 240 & 0.720\,ms & 1.580\,ms & 1.898\,ms \\
With flow selectors & 98  & 1.059\,ms & 1.738\,ms & 1.898\,ms \\
Topology-only       & 142 & 0.437\,ms & 1.501\,ms & 1.629\,ms \\
\midrule
\multicolumn{5}{l}{\textit{By certification status:}} \\
\quad Accepted      & 58  & 1.044\,ms & 1.738\,ms & 1.812\,ms \\
\quad Rejected      & 32  & 1.135\,ms & 1.757\,ms & 1.898\,ms \\
\quad Abstain       & 128 & 0.428\,ms & 1.507\,ms & 1.629\,ms \\
\bottomrule
\end{tabular}
\end{table}

\subsubsection{Rule-Based Baseline Comparison}

Table~\ref{tab:compiler_comparison} compares the LLM compiler against
a rule-based parser using regex and keyword matching. The rule-based
approach achieves 100\% compilation (by construction) but only 56.7\%
full match, with the gap most pronounced on compositional intents
(40.0\% vs.\ 86.2\%). This confirms that intent compilation is a
compositional reasoning task that benefits from LLM capabilities.

\subsection{End-to-End Constrained Routing}
\label{sec:eval_e2e}

We evaluate the complete pipeline (compile $\to$ verify $\to$ ground
$\to$ route) across four constrained scenarios:

\begin{table}[t]
\centering
\caption{End-to-end constrained routing (3 seeds $\times$ 20 steps).}
\label{tab:e2e}
\begin{tabular}{lccc}
\toprule
\textbf{Scenario} & \textbf{GNN PDR} & \textbf{Dijkstra PDR} & \textbf{Violations} \\
\midrule
Node failure      & 98.69\% & 97.83\% & 0 \\
Plane maintenance & 70.51\% & 70.22\% & 0 \\
Polar avoidance   & 34.63\% & 47.86\% & 0 \\
Compositional     & 34.27\% & 47.07\% & 0 \\
\bottomrule
\end{tabular}
\end{table}

Zero constraint violations across all scenarios confirm that the
validator-grounding pipeline correctly enforces compiled constraints.
The apparent PDR gap in polar/compositional scenarios is analyzed
in Section~\ref{sec:reachability}.

\subsection{Reachability Separation Analysis}
\label{sec:reachability}

Raw PDR differences in polar-avoidance scenarios (13pp gap between
GNN and Dijkstra) could suggest routing quality differences. However,
Table~\ref{tab:reachability} reveals that these gaps are entirely
explained by the reachability ceiling: polar avoidance at 45\textdegree\
removes inter-plane ISLs such that only 24\% of OD pairs remain reachable
at any given snapshot. Both GNN and Dijkstra achieve \textbf{100\%
delivery on reachable pairs}---the raw PDR gap reflects different
sampling of reachable pairs across evaluation runs, not routing quality.

This finding has two implications: (1)~the GNN router's distillation
quality is confirmed even under severe topology degradation, and
(2)~PDR alone is insufficient for evaluating constrained routing;
reachability-conditioned metrics are necessary.

\subsection{Robustness Analysis}
\label{sec:robustness}

\subsubsection{Topology Degradation Sweep}

Table~\ref{tab:topology_sweep} shows GNN vs.\ Dijkstra PDR across
9 severity levels of orbital plane removal (5\%--85\% capacity).
The GNN matches Dijkstra within 0.22pp at all levels, confirming
robust distillation quality even under extreme degradation. The
non-monotonic PDR pattern reflects topology-dependent reachability
under random plane selection.

\subsubsection{Cross-Model Scaling}

Table~\ref{tab:cross_model} compares 4B and 9B parameter LLMs on
the full benchmark. The 9B model dramatically outperforms the 4B
model (87.6\% vs.\ 54.2\% full match, 98.4\% vs.\ 59.6\% compiled),
suggesting a significant scaling effect for compositional reasoning
between 4B and 9B parameters in this domain. The 9B model is also 13$\times$ faster
(15.7s vs.\ 204s), likely due to better first-attempt accuracy
reducing repair iterations.

\subsubsection{Out-of-Distribution Generalization}

Table~\ref{tab:ood} evaluates the compiler on 38 paraphrased intents
not seen during few-shot prompting. The compiler maintains 81.8\%
full match accuracy on scorable intents (33/38), with only 4.4pp
degradation from template intents. Single-constraint paraphrases
achieve 95.0\%, while compositional paraphrases show more degradation
(40.0\%, $n$=5), indicating that compositional generalization remains
the primary challenge.

\subsubsection{Validator Safety Analysis}

The three-way confusion matrix (Table~\ref{tab:confusion_matrix})
reveals the validator's safety profile under the 8-pass pipeline.
Programs receive \decision{Accept} (with constructive routing witness),
\decision{Reject} (proven infeasible or structurally invalid), or
\decision{Abstain} (unsupported constraint combination---deferred to Dijkstra fallback).
Unsafe acceptance is 0\% across all categories: none of the 30
benchmark-infeasible intents receive \decision{Accept}. Pass~8
additionally identifies 32 programs among the feasible categories
whose latency or hop constraints cannot be satisfied on the physical
topology; 29/32 are independently confirmed via a separate Dijkstra
oracle (the 3 borderline cases fall within region-grounding margin).
Of these 32, 17 correspond to intents whose routing infeasibility was
newly discovered under distance-based edge delays; the remaining 15
are feasible intents whose LLM-compiled programs contain overly tight
bounds (a compiler accuracy issue, not a safety issue).
Coverage (decided rate) is 46.7\%, with 128 topology-only programs
receiving \decision{Abstain} due to absent flow selectors.

Table~\ref{tab:runtime} shows that the 8-pass validator adds
negligible overhead: median 0.720\,ms per program, with even the
most expensive cases (rejected programs requiring Dijkstra witnesses)
completing in under 2\,ms.

Adversarial testing (15 tests across resource exhaustion, semantic
conflicts, and boundary exploitation) achieves 100\% detection
(15/15), including near-total capacity removal, cross-constraint
contradictions, and boundary value exploitation.

\subsubsection{Cross-Constellation Generalization}

Table~\ref{tab:cross_constellation} evaluates the GNN router
zero-shot on two out-of-distribution constellation configurations.
The GNN generalizes perfectly to altitude changes (1200\,km vs.\
training 550\,km, same 53$^\circ$ inclination) because the grid
topology structure is preserved---only edge weights change. However,
the GNN collapses to 45.18\% PDR on SSO 97$^\circ$ (vs.\ Dijkstra
99.09\%), where near-polar inclination fundamentally alters ISL
geometry and satellite distribution. This confirms the GNN learns
topology-specific cost patterns rather than general routing
principles, motivating the compile--verify--ground pipeline as the
constellation-agnostic safety layer.

\subsubsection{Polar Exclusion Robustness}

Table~\ref{tab:polar_exclusion} measures GNN degradation under
progressively aggressive polar exclusion zones (inter-plane ISLs
disabled above the latitude threshold). Unlike the E2E polar avoidance
scenario (Table~\ref{tab:reachability}), which evaluates over all OD
pairs including unreachable ones, this experiment evaluates only on
the baseline OD pair set where connectivity is preserved---hence
Dijkstra maintains 99.75\% throughout. With 9.8\% of edges removed
(50$^\circ$ threshold), the GNN retains 80.37\% PDR; at 28.8\%
removal (30$^\circ$), it drops to 38.17\%. The monotonic degradation curve quantifies the
GNN's sensitivity to topology perturbation and reinforces the
Dijkstra fallback design for constrained scenarios.

\section{Discussion}
\label{sec:discussion}

\textbf{GNN as optional accelerator.}
Our results consistently show that the GNN router matches Dijkstra
quality across all tested conditions on the training constellation---unconstrained (99.8\% PDR),
node failure (98.7\% vs.\ 97.8\%), and topology degradation up to
85\% capacity removal ($\Delta \leq 0.22$pp). Cross-constellation
evaluation reveals that this quality transfers to altitude changes
(1200\,km: 99.75\%) but not to different inclinations (SSO 97$^\circ$:
45.18\%), and polar exclusion tests show monotonic degradation under
edge removal (80\% PDR at 10\% removal, 38\% at 29\%). The GNN's
value is therefore not in routing quality but in inference speed
(17$\times$), enabling real-time per-packet decisions on the trained
topology. In deployment, the GNN serves as an accelerator with
Dijkstra as a verified fallback for OOD topologies.

\textbf{LLM compiler value proposition.}
The 46.2pp advantage over rule-based parsing on compositional intents
(86.2\% vs.\ 40.0\%) demonstrates that intent compilation is
fundamentally a compositional reasoning task. Rule-based approaches
handle single constraints adequately (76.2\%) but cannot compose
multiple constraint types from varied natural language expressions.
The 4B vs.\ 9B comparison further suggests a notable model-size scaling effect
for compositional reasoning in this domain.

\textbf{Verification as safety net.}
The validator's three-way classification (0\% unsafe acceptance,
46.7\% coverage) provides strong safety guarantees: \decision{Abstain}
defers to Dijkstra fallback (safe without certification), and
\decision{Accept} carries a constructive witness.
The feasibility certifier (Pass~8) closes the semantic gap that
previously allowed 72\% of infeasible intents to pass unchecked.
The 8-pass pipeline adds under 2\,ms overhead (Table~\ref{tab:runtime}),
making it practical for real-time deployment.

\textbf{Latency considerations.}
The compiler's 15.7s average latency positions it for offline or
semi-online use: operators issue intents minutes to hours before
they take effect (e.g., scheduled maintenance, SLA provisioning).
For emergency scenarios requiring sub-second response, pre-compiled
constraint templates with parameter substitution would be more
appropriate. The 77.9\% first-attempt success rate suggests that
most intents do not require the repair loop, and latency could be
further reduced with model distillation or quantization.

\textbf{Limitations.}
(1)~The benchmark is synthetic; real operator intents may exhibit
different distributions and ambiguity patterns.
(2)~The OOD compositional sample ($n$=5 original, expanded to 30)
remains small relative to the combinatorial space of possible
constraint compositions.
(3)~The semantic gap in infeasible intent detection requires
additional verification passes (e.g., constraint satisfiability
pre-solving) not yet implemented.
(4)~The GNN router does not incorporate constraints as input features;
it routes on the masked topology, which limits its ability to
optimize for constraint-specific objectives like latency deadlines.
(5)~Cross-constellation evaluation (Table~\ref{tab:cross_constellation})
shows the GNN does not generalize to different inclinations; retraining
or fine-tuning is needed per constellation geometry.

\section{Conclusion}
\label{sec:conclusion}

We presented an end-to-end system for intent-driven constrained
routing in LEO mega-constellations. The system combines a GNN
cost-to-go router (99.8\% PDR, 17$\times$ speedup), an LLM intent
compiler (87.6\% full semantic match), and an 8-pass deterministic
validator (0\% unsafe acceptance, 100\% structural corruption detection) to bridge the gap
between operator intent and network configuration.

Our evaluation on a 240-intent benchmark demonstrates that LLM-based
compilation significantly outperforms rule-based parsing on
compositional intents (+46.2pp), generalizes to novel phrasings
(81.8\% OOD accuracy), and produces zero constraint violations in
end-to-end routing. The reachability separation analysis reveals
that apparent performance gaps under polar constraints are topological
artifacts, not routing deficiencies.

Future work will address the semantic verification gap through
constraint satisfiability pre-solving, extend the GNN to accept
constraint features as input, and validate the system on real
operator intent traces from production constellations.



\end{document}